\documentclass[11pt, a4paper]{article}
\usepackage{amsfonts,amsmath,amsthm,amssymb,mathrsfs}
\usepackage{graphicx}	
\newcommand{\ket}[1]{\left| #1 \right>} 
\newcommand{\bra}[1]{\left< #1 \right|} 
 
\usepackage[english]{babel}
\usepackage{color}
\DeclareMathOperator{\e}{e}
\setlength{\parindent}{0,2cm}

\def\assign{\mathrel{:=}}
\def\diag{\mathop{\rm diag}}

\theoremstyle{proposition}
\newtheorem{proposition}{Proposition}

 
\title{A Quantum Key Distribution protocol for qu$d$its with better noise resistance}
\author{
  Zo\'e AMBLARD\\
  XLIM Laboratory\\
  University of Limoges
  \and
  Fran\c{c}ois ARNAULT\\
  XLIM Laboratory\\
  University of Limoges
}
\begin{document}
\maketitle

\bigskip

\bigskip

\bigskip

\begin{abstract}
The Ekert quantum key distribution protocol~\cite{Ekert1991} uses pairs of entangled qubits and
performs checks based on a Bell inequality to detect eavesdropping.
The N-DEB protocol~\cite{dDEB} uses instead pairs of entangled qu$d$its to achieve better noise
resistance than the Ekert protocol. It performs checks based on a Bell inequality for qu$d$its found in~\cite{CGLMP} and which we will refer to as the CGLMP-$d$.  In this paper, we present the generalization of our protocol h3DEB~\cite{h3DEB} for qu$d$its.  This protocol also uses pairs of entangled qu$d$its, but achieves even better noise resistance than N-DEB and is showed to be secure against the same family of cloning attacks than N-DEB.  This gain of performance is obtained by using another inequality called here hCHSH-$d$, which was discovered in~\cite{Arnault2012}. For each party, the hCHSH-$d$ inequality involves $2d$ observables. We explain how the parties can measure these observables and thus are able to check the violation of hCHSH-$d$. In the presence of noise, this violation allows the parties to ensure the secrecy of the key because it guarantees the absence of a local Trojan horse attack. The advantage of our proposed scheme is that it results in an increased resistance to noise while remaining secure against individual attacks.
\end{abstract}
\newpage

\section{Introduction}

The Ekert91 protocol~\cite{Ekert1991} exploits pairs of entangled states to exchange keys, and uses Bell inequalities to detect eavesdropping.  Some of the measurement results obtained by the two parties Alice and Bob are perfectly correlated, providing key bits.  Other measurement results must exhibit quantum behaviour if there is no alteration of the quantum channel, and this permits to detect eavesdropping by testing a Bell inequality violation.

The amount of quantum violation is an important characteristic in key distribution protocols because a larger violation is one of the factors that lead to a better noise resistance~\cite{ViolAndNoise}.   Some progress has been made to increase this amount of violation with the use of parties with higher dimension~\cite{ViolStrongThan} or specific entangled states. One can also consider choosing different Bell inequalities to detect eavesdropping.

In their article introducing the N-DEB protocol~\cite{dDEB}, Durt, Kaszlikowski, Chen and Kwek use $d$-dimensional quantum systems (qu$d$its), and the CGLMP-$d$ inequality to obtain better noise resistance than for the Ekert'91 protocol.

Our work makes one step further by using a recent discovered Bell inequality (here called hCHSH-$d$), which belongs to the family of homogeneous Bell inequalities introduced in~\cite{Arnault2012}.  The amount of violation which can be achieved with entangled states is even better than for the CGLMP-$d$. Consequently, the protocol we derive is more tolerant to noise than N-DEB.

Devices called multiport beam splitters~\cite{Tritters} (or ditters), are mentioned in~\cite{dDEB} as one way to handle measurements of qu$d$its.  Ditters are analyzed in~\cite{tritter1} and experimentally tested for $d=3,4$ in~\cite{tritter2}.  Our new protocol h$d$DEB described in this article is analysed in view of the use of ditters to implement measurements.  A crucial point here will be that some products of observables, each implemented by a ditter coupled with a measurement in the computational basis, can also be implemented by another single ditter.  This is needed for our protocol as the inequality hCHSH-$d$ involves such products. 

The paper is organized as follows.
It begins with some reminders and precisions about measurements with ditters in Section~2, where we also consider the use of ditters for implementing the product of observables.  Then Section~3 recalls the N-DEB protocol.   After that, Section~4 introduces the Bell inequality hCHSH-$d$ we use and defines our new protocol h$d$DEB.  In Section~5, we study the security of our protocol against individual attacks and we show that our protocol h$d$DEB reaches a compromise between resistance to noise and security. Finally, the paper concludes about the advantage of h$d$DEB providing better resistance to noise.
\newpage

\section{Prerequisites}

In what follows, all the sums will be taken modulo $d$. Our protocol use qu$d$its and observables with $d$ outcomes which we label for readability $1,\omega,...,\omega^{d-1}$ where $\omega$ is the $d^\text{th}$
root of unity ${\omega=e^\frac{2i\pi}{d}}$.  The observables used by the two parties Alice and Bob
will be denoted respectively by $A_i$ and~$B_j$ for some indexes $i$ and~$j$.  We will also use the
correlation functions introduced in~\cite{tritter2}~:
$$
  E(A_iB_j) = \sum_{a,b=1,\omega,...,\omega^{d-1}}  P(A_i=a,B_j=b) \, ab.
$$

\subsection{Measurements with ditters}

A ditter is parameterized by a $d$-uplet $(\varphi_0,\varphi_1, ..., \varphi_{d-1})$ of phase shifts.  For
readability we put ${\theta_j=\exp(i\varphi_j)}$ (for $j=0,1, ... , d-1$) and
${\Theta=(\theta_0, ... , \theta_{d-1})}$.  The ditter performs over a qu$d$it the following unitary
transformation : 
$$
  U_{\Theta} \assign HD_\Theta
  =
  \frac{1}{\sqrt{d}}\sum\limits_{k,l=0}^{d-1}\omega^{kl}\theta_{l}\ket{k}\bra{l}
$$
where the matrices $H$ and $D_\Theta$ are $H=(\omega^{kl})_{0\leq k,l\leq d-1}$ and
$D_\Theta=\diag(\theta_0, ... , \theta_{d-1})$.

  After the transformation performed by the ditter, a measurement in the computational basis is made using $d$ detectors.
This measurement is represented by the observable
$$
  Z = \sum_{k=0}^{d-1} \omega^{k}\ket{k}\bra{k}.
$$
As we assumed the $d$ possible outcomes to be labeled by complex roots of unity, we
use unitary observables.  Thus, the measurement obtained by the combination of the ditter and
the detectors corresponds to the following observable 

\begin{equation}
\begin{split}
Z_{\Theta} \assign D_{\Theta^{*}}H^{\dagger}ZHD_{\Theta}= \sum_{k=0}^{d-1} \theta_{k}\theta_{k+1}^{*}\ket{k+1}\bra{k}
\end{split}
\label{eq:ZTheta}
\end{equation}

which gives us, in the particular case where $\theta_{j}=\theta^{j}$:

\begin{equation}
\begin{split}
Z_{\Theta}= \theta^{d-1}\ket{0}\bra{d-1}  + \sum_{k=0}^{d-2} \theta^{*}\ket{k+1}\bra{k}.
\end{split}
\label{ref1}
\end{equation}

\subsection{Products of incompatible observables}

 Suppose that we have two measurement devices (each one represented by a ditter and a measurement in the computational basis), which implement the observables $Z_\Theta$ and~$Z_\Lambda$ described by Equation~(\ref{eq:ZTheta}), with $\Theta=(\theta_0,\theta_1, ... , \theta_{d-1})$ and  $\Lambda=(\lambda_0,\lambda_1, ... , \lambda_{d-1})$.  Then we need to implement the product observable $Z_\Theta^{i} Z_\Lambda^{j}$ for $i =1, ... , d-2 $ and $j=d-i-1$.

\begin{proposition}
Let define the $d$-uplet of phase shifts $\Gamma=(\gamma_0,\gamma_1, ... , \gamma_{d-1})$. 

For any ${i =1, ... , d-2}$ and $j=d-i-1$, the observable $Z_\Theta^{i} Z_\Lambda^{j}$ verifies :

$$
Z_\Theta^{i} Z_\Lambda^{j} = Z_{\Gamma}^\dagger = \sum_{k=0}^{d-1} \gamma_{k+1}\gamma_{k}^{*}\ket{k}\bra{k+1}
$$
with 
$$
\forall k = 0, ..., d-1  \phantom{ iiiiii }   \gamma_{k} = \theta_{k}\theta_{k+1} ... \theta_{k-i-1}\lambda_{k-i}\lambda_{k-i+1} ... \lambda_{k}.
$$
\end{proposition}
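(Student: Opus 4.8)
The plan is to reduce both factors to explicit ``weighted shift'' matrices and then multiply.

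First I would record, for an arbitrary phase $d$-uplet $\Theta$ and any integer $p\ge 0$, the closed form
$$
  Z_\Theta^{\,p} \;=\; \sum_{l=0}^{d-1} \theta_l\,\theta_{l+p}^{*}\,\ket{l+p}\bra{l},
$$
proved by an immediate induction on $p$ starting from Equation~(\ref{eq:ZTheta}); the inductive step only uses $\theta_{l+p}^{*}\theta_{l+p}=1$, valid because each $\theta_j$ is a unit complex number. In words, $Z_\Theta^{\,p}$ sends $\ket{l}$ to $(\theta_l/\theta_{l+p})\,\ket{l+p}$.

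Next I apply this with $p=i$ to $\Theta$ and with $p=j$ to $\Lambda$ and compose the two maps. The bra--ket contraction fixes one summation index, and, crucially, the hypothesis $i+j=d-1$ forces the target index to be $l+i+j\equiv l-1\pmod d$; after the substitution $k=l-1$ one gets
$$
  Z_\Theta^{\,i} Z_\Lambda^{\,j} \;=\; \sum_{k=0}^{d-1} \theta_{k+1+j}\,\theta_k^{*}\,\lambda_{k+1}\,\lambda_{k+1+j}^{*}\;\ket{k}\bra{k+1}.
$$
On the other side, the definition of $Z_\Gamma$ in Equation~(\ref{eq:ZTheta}) gives $Z_\Gamma^{\dagger}=\sum_{k}\gamma_{k+1}\gamma_k^{*}\ket{k}\bra{k+1}$. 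Since both expressions are supported on the same rank-one operators $\ket{k}\bra{k+1}$, the proposition is equivalent to the scalar identities $\gamma_{k+1}\gamma_k^{*}=\theta_{k+1+j}\,\theta_k^{*}\,\lambda_{k+1}\,\lambda_{k+1+j}^{*}$ for all $k$.

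Finally I would verify these identities for the stated $\gamma_k$. Because all $\theta_j$ and $\lambda_j$ have modulus $1$, $\gamma_{k+1}\gamma_k^{*}$ equals the ratio $\gamma_{k+1}/\gamma_k$, and this ratio telescopes in each of the two blocks of the product defining $\gamma_k$: the $\theta$-block contributes $\theta_{k+1+j}\theta_k^{*}$ and the $\lambda$-block contributes $\lambda_{k+1}\lambda_{k-i}^{*}$; the relation $k-i\equiv k+j+1\pmod d$ (again $i+j=d-1$) identifies $\lambda_{k-i}$ with $\lambda_{k+1+j}$, and the two sides coincide. The only delicate point in the whole argument is the modular bookkeeping of the index ranges: one must check that each block of $\gamma_k$ contains fewer than $d$ factors, so that no index is repeated within a block and the telescoping is literal, and that the endpoint $\theta_{k-i-1}$ of the $\theta$-block and the startpoint $\lambda_{k-i}$ of the $\lambda$-block are exactly the ones matching the indices produced by the product computed above. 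One may also remark that $\gamma_k$ is pinned down only up to an overall phase, since only the products $\gamma_{k+1}\gamma_k^{*}$ occur, and that the cyclic consistency $\prod_k(\gamma_{k+1}/\gamma_k)=1$ holds automatically because each of $\theta$ and $\lambda$ runs over a complete set of residues.
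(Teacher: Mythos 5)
Your proof is correct and follows essentially the same route as the paper's: both compute $Z_\Theta^{i} Z_\Lambda^{j}$ as a weighted cyclic shift by $-1$ using the closed form for $Z_\Theta^{p}$, and then match the coefficients $\gamma_{k+1}\gamma_k^{*}$ against the stated formula for $\gamma_k$. You merely make explicit two details the paper leaves implicit, namely the induction establishing $Z_\Theta^{p}=\sum_l \theta_l\theta_{l+p}^{*}\ket{l+p}\bra{l}$ and the telescoping verification that the proposed $\gamma_k$ indeed satisfies $\gamma_{k+1}\gamma_k^{*}=\theta_{k-i}\theta_k^{*}\lambda_{k+1}\lambda_{k-i}^{*}$.
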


\begin{proof}

From Equation~(\ref{eq:ZTheta}), we write :

\begin{equation*}
\begin{split}
Z_\Theta^{i} Z_\Lambda^{j} & = (\sum_{k=0}^{d-1} \theta_{k}\theta_{k+i}^{*}\ket{k+i}\bra{k}) \times (\sum_{l=0}^{d-1} \lambda_{l}\lambda_{l+j}^{*}\ket{l+j}\bra{l}) \\
& = \sum_{l=0}^{d-1} \theta_{l+j}\theta_{l+i+j}^{*} \lambda_{l}\lambda_{l+j}^{*}\ket{l+i+j}\bra{l} \\
& = \sum_{k=0}^{d-1} \theta_{k-i}\theta_{k}^{*} \lambda_{k+1}\lambda_{k-i}^{*}\ket{k}\bra{k+1}.
\end{split}
\end{equation*}

The generalized Pauli matrix $Z$ in dimension $d$ verifies $Z^{d}=I_{d}$. The matrix $Z$ being unitary, it also verifies $I_{d} = ZZ^{\dagger}$, which gives $Z^{d-1}=Z^\dagger$.

For any observable $Z_{\Omega}$ we have ${Z_{\Omega}^{d-1} = D_{\Omega}^{*}H^{\dagger}Z^{\dagger}HD_{\Omega} = Z_{\Omega}^\dagger}$. In order to rewrite a product observable as a new ditter measurement
 of the form $Z_{\Gamma}^\dagger = \sum_{k=0}^{d-1} \gamma_{k+1}\gamma_{k}^{*}\ket{k}\bra{k+1}$, we need $\gamma_{k+1}\gamma_{k}^{*} = \theta_{k-i}\theta_{k}^{*} \lambda_{k+1}\lambda_{k-i}^{*}$. One of the possible solutions is :
$$
\forall k = 0, ..., d-1  \phantom{ iiiiii }   \gamma_{k} = \theta_{k}\theta_{k+1} ... \theta_{k-i-1}\lambda_{k-i}\lambda_{k-i+1} ... \lambda_{k}.
$$
\end{proof}

From Proposition 1, we conclude that any product observable $Z_\Theta^{i} Z_\Lambda^{j}$ is implementable by a ditter and a detector, with the detector performing a measurement corresponding to the observable $Z^{\dagger}$ instead of $Z$. 

\section{The N-DEB protocol}

We will recall the N-DEB protocol introduced in~\cite{dDEB}.

\subsection{The $d$-dimensional inequality used in N-DEB}

For a given value of $d$, the N-DEB protocol uses the $d$-dimensional inequality introduced in~\cite{CGLMP} which is referred in N-DEB as the generalized CHSH and which we will call in our paper the CGLMP-$d$ inequality. The maximally entangled state
\begin{equation}
\begin{split}
\ket{\psi}=\frac{1}{\sqrt{d}}\sum_{j=0}^{d-1}\ket{jj}
\label{GHZstate}
\end{split}
\end{equation}
is known to violate this inequality with the four bases used in N-DEB. For these bases, we will use the same denomination "optimal bases'' as in~\cite{dDEB}. The violation values for $d=3, 4, 5$ are summarized in the following table :

\begin{table}[h]
\centerline{\begin{minipage}[t]{0.5\linewidth}
\centering
\begin{tabular}{c | c c c c c}
\hline\hline
 $d$ & $v_{d}$ & $N_{d}$\\
\hline
$3$ & $1.436$ & $0.304$\\ 
$4$ & $1.448$ & $0.309$\\ 
$5$ & $1.455$ & $0.313$\\ 
\end{tabular}
\label{tab:hresult1}
\caption{\label{pic:violationsCGLMPd} Violation value of the CGLMP-$d$ inequality with the maximally entangled state for $d = 3, 4, 5$}. 
\end{minipage}}
\end{table}

\subsection{The N-DEB procedure}

Alice uses four observables $A_a$ with $a=0$ to 3, corresponding to ditter measurements
with phase shift $(1, \theta^a, \theta^{2a}, ... , \theta^{(d-1)a})$.  Bob use four observables $B_b$ with $b=0$ to 3, corresponding to ditter measurements with phase shift $(1,\theta^{-b},\theta^{-2b}, ... , \theta^{-(d-1)b})$.
The following steps are repeated until Alice and Bob obtained a shared key of desired length.

\begin{enumerate}
\item Alice and Bob obtain an entangled pair of states in the maximally entangled state defined in~(\ref{GHZstate}).

\item Alice draws randomly a value for $a \in \{0,1,2,3\}$ and makes the measurement corresponding to the observable $A_{a}$ whereas Bob draws randomly a value for $b\in \{0,1,2,3\}$ and makes the measurement corresponding to the observable $B_{b}$. 

\item When $a=b$, the results obtained by Alice and Bob are perfectly correlated. Indeed, the two
ditters used by Alice and Bob perform on the shared maximally entangled state the transformation ${(H \otimes H)(D_{\Theta} \otimes D_{\Theta^{*}})}$ with $\Theta=(1,\theta^a,\theta^{2a}, ... , \theta^{(d-1)a})$.  But it is easy to check that:
\begin{equation}
{(H \otimes H)(D_{\Theta} \otimes D_{\Theta^{*}})(\frac{1}{\sqrt{d}}\sum\limits_{j=0}^{d-1}\ket{jj})=\frac{1}{\sqrt{d}} \sum\limits_{\substack{k,k'=0 \\ k+k'\equiv 0 [d] }}^{d-1}\ket{kk'}}.
\label{CollectTrits}
\end{equation}
Consequently, in this case where $a=b$, Alice and Bob obtain a new dit for the shared key.

\item When $a \neq b$, Alice and Bob can use a fraction of their joint measurements to detect eavesdropping by checking a configuration of maximal violation of the CGLMP-$d$.
\end{enumerate}

\section{The hdDEB protocol}
 
We will now describe our protocol.  It achieves better noise resistance because it uses an
homogeneous Bell inequality, which has a larger violation factor than CGLMP-$d$.

\subsection{Violation of the inequality hCHSH-$d$}

Depending on the entangled state that we want to use in our protocol, we can choose an inequality belonging to the set of homogeneous Bell inequalities described in~\cite{Arnault2012} and which will be called hCHSH-$d$.
It has also been shown in~\cite{Arnault2012} that the homogeneous Bell inequalities are satisfied under the hypothesis of local realism, and that they form a complete set.  An homogeneous Bell inequality for two parties can in general be written

\begin{equation}
\text{Re}(\frac{\rho}{{d^2}\text{cos}(\frac{\pi}{d})}E(T)) \leqslant 1
\label{InegHomog}
\end{equation}

where $\rho = \e^\frac{i\pi}{d}$ and $T$ is an homogeneous polynomial in some measurements Alice and Bob can make.  We call $T$
a {\sl Bell operator}.

A feature of the homogeneous Bell inequalities is that~$T$ involve some products of observables (for example $A_1^3A_2$, $A_1^2A_2^2$ and $A_1A_2^3$ for Alice in the case ${d=5}$) which become incompatible when considered as quantum observables.  The outcomes of such a product of course cannot be meant to be the products of outcomes of incompatible observables. In Proposition 1, we show that if we use the unitary observables $Z_\Theta$ defined in~(\ref{eq:ZTheta}) for the $A_i$ previously described, the product is also a unitary observable which outcomes can be obtained with a single measurement. We also conclude that we can perform this product measurement in terms of a new ditter operation and a final detection in the computational basis.

The local realistic elements $A_1^{d-1}$, $A_1^{i}A_2^{j}$ (for $i =1, ... , d-2 $ and $j=d-i-1$) and
$A_2^{d-1}$ for Alice have to be replaced by the observables
$$
  Z_{\Theta_A}^\dagger, \quad Z^\dagger_{\Gamma_{ij_{A}}},\quad  Z_{\Lambda_A}^\dagger
$$
where the $Z^\dagger_{\Gamma_{ij_{A}}}$ is a product observable as described in Proposition 1 and $\Theta_A$, $\Lambda_A$ are the parameters corresponding to the optimal bases.  Similarly, the party Bob has to use the observables $Z_{\Theta_B}^\dagger$, $Z^\dagger_{\Gamma_{ij_{B}}}$, $Z_{\Lambda_B}^\dagger$.

\medskip

  After substituting these observables to the variables in a Bell operator~$T$, a quantum state $\ket{\psi}$ violates the homogeneous Bell inequality associated to~$T$
with a violation factor $v \geqslant 1 $ if it verifies (compare to~(\ref{InegHomog})) :

\begin{equation}
\frac{1}{{d^2}\text{cos}(\frac{\pi}{d})}\text{Re}(\bra{\psi}\rho T\ket{\psi}) = v \text{  }\text{ for } \rho = \e^\frac{i\pi}{d}.
\label{BellOpe}
\end{equation}

\subsection{The hdDEB procedure}

As for the NDEB protocol described in~\cite{dDEB}, we denote $\mathscr{A}_a = A_0^{d-1-a}A_1^{a}$ with $a=0,1,2,..,d-1$ the observable parameterized by phase shift $(1,\theta^a,..,\theta^{(d-1)a})$, and $\mathscr{B}_b = B_0^{d-1-b}B_1^{b}$ with ${b=0,1,2,..,d-1}$ the observable parameterized by $(1,\theta^{-b},\theta^{-(d-1)b})$. Each of these observables is expected to be implemented with a single ditter from Proposition 1.

\begin{enumerate}
\item Alice and Bob obtain an entangled pair of states in the $d$-dimensional entangled state $\ket{\psi} \assign \sum\limits_{j=0}^{d-1}\delta_{j}\ket{jj}$ with $\delta_{j} \in \mathbb{C}$ for some $j = 0, ... , d-1$.
\item Alice draws randomly a value of $a$ and performs her measurement in the basis associated to the observable $\mathscr{A}_{a}$ whereas Bob draws randomly a value of $\mathscr{B}_b$ and performs his measurement in the basis associated to $B_{b}$.
\item When $a=b$, the results obtained by Alice and Bob are perfectly correlated and they obtain a new dit for the shared key. For completeness, a proof of this statement is given in Appendix A.
\item For some choices of $a$ and $b$ Alice and Bob collect the issues
of their measurements in order to detect eavesdropping by checking a violation of the homogeneous Bell inequality hCHSH-$d$ considered.
\end{enumerate}

\subsection{Choice of the inequality and resistance to noise for ${d = 3, 4, 5 }$}

With the four "optimal bases'' described in~\cite{dDEB}, the maximally entangled states don't allow to reach the largest violations. We consider several non-maximally entangled states which, when used with their corresponding homogeneous Bell inequalities, reach largest violations. A precise derivation of the case $d=3$ can be found in~\cite{h3DEB}.

\medskip

We compare the violations for the following states $\ket{\psi_{d}}$ with $d = 3, 4, 5$  :

\begin{equation*}
\ket{\psi_{3}} = \frac 1{\sqrt{3}} (\ket{00}+\ket{11}+\ket{22}).
\end{equation*}

\begin{equation*}
\ket{\psi_{4}} = \frac 12(\ket{00}+\ket{11}+\ket{22}+\ket{33}).
\end{equation*}

\begin{equation*}
\ket{\psi_{5}} = \frac 1{\sqrt{5}} (\ket{00}+\ket{11}+\ket{22}+\ket{33}-i\ket{44}).
\end{equation*}

We use the homogeneous inequalities associated to each $\ket{\psi_{d}}$ :

\begin{equation}
\frac{1}{{d^2}\text{cos}(\frac{\pi}{d})} \text{Re}(\rho E(T_{d})) \leqslant 1
\label{HBI}
\end{equation}

with 

\medskip

$
 T_{3}
  = - \bigl[(\omega-4)(A_1^2B_1^2) + (\omega+2)(A_1^2B_1B_2) + (\omega-1)(A_1^2B_2^2) \\
\phantom{espace} + (\omega+5)(A_1A_2B_1^2) + (\omega+2)(A_1A_2B_1B_2) + (\omega+1)(A_1A_2B_2^2) \\
\phantom{espace} + (\omega+5)(A_2^2B_1^2) + (\omega+2)(A_2^2B_1B_2) + (\omega-1)(A_2^2B_2^2)\bigr]. \\
$

\medskip

$
 T_{4}
  = -(3\omega + 1) (A_1^3 B_1^3) - (\omega + 1)(A_1^3 B_1^2 B_2) - 5(\omega - 1) (A_1^3 B_1 B_2^2)  \\
  \phantom{ T_{1} = } - (3\omega - 1) (A_1^3 B_2^3) + (\omega + 1) (A_1^2 A_2 B_1^3) - (\omega + 3) (A_1^2 A_2 B_1^2 B_2) \\
  \phantom{ T_{1} = } - (\omega + 1) (A_1^2 A_2 B_1 B_2^2) - (3\omega + 1) (A_1^2 A_2 B_2^3) + (3\omega + 1) (A_1 A_2^2 B_1^3) \\
  \phantom{ T_{1} = } + (5\omega + 1) (A_1 A_2^2 B_1^2 B_2) - (7\omega + 1)(A_1 A_2^2 B_1 B_2^2) + 3(\omega + 1) (A_1 A_2^2 B_2^3) \\
  \phantom{ T_{1} = } - 5(\omega + 1) (A_2^3 B_1^3) + (\omega - 1) (A_2^3 B_1^2 B_2) + (\omega + 1) (A_2^3 B_1 B_2^2) \\
  \phantom{ T_{1} = } - (\omega - 1) (A_2^3 B_2^3).
    $

\medskip

For conveniency, the Bell operator $T_{5}$ is derived in Appendix B. 
 
\medskip

These states and the Bell operators corresponding to their inequalities were chosen because they reach the best compromise between noise resistance and security against individual attacks, as it will be explained in Section 5. 

\medskip

We summarize each choice of entangled state and Bell operator in the following table  :

\begin{table}[h]
\centerline{\begin{minipage}[t]{0.5\linewidth}
\centering
\begin{tabular}{c | c c c c c}
\hline\hline
 $\ket{\psi_{d}}$ & $T_{d}$ & $v_{d}$ & $N_{d}$\\
\hline
$\ket{\psi_{3}}$ & $T_3$ & $1.505$ & $0.336$\\ 
$\ket{\psi_{4}}$ & $T_4$ & $1.546$ & $0.353$\\ 
$\ket{\psi_{5}}$ & $T_5$ & $1.574$ & $0.365$\\ 
\end{tabular}
\label{tab:hresult1}
\caption{\label{pic:violationshCHSHd} Violation value of the hCHSH-$d$ inequality depending on the entangled state and its Bell operator} 
\end{minipage}}
\end{table}

\section{An alternative version of hdDEB secure against individual attacks}

\subsection{An optimal cloning-based attack for the N-DEB and hdDEB protocols}

A cloning-based attack uses a cloning machine (also known as cloner) to copy an input state. Because of the no-cloning theorem, the clonage is imperfect and the adversary aims to design an optimal cloner which copies a specific set of states as accurately as possible. Depending on the properties of the input state or the family of the cloner, the state can be reproduced with a certain amount of fidelity $F_A$ defined in~\cite{dDEB} by
\begin{equation}
F_A = \bra\psi \rho \ket\psi
\label{fidelity}
\end{equation}
where $\psi$ is the initial pure state, and $\rho$ the density of the clone (not necessarily pure).

\bigskip

In~\cite{dDEB} is described a cloning-based attack which uses a phase-covariant qu$d$it cloning machine. This cloner acts with the same accuracy on each states of the optimal bases used in the N-DEB protocol. All these states are copied with the same fidelity $F_A$ depending on the value of $d$ :

\begin{table}[h]
\begin{minipage}[t]{0.5\linewidth}
\centering
\begin{tabular}{| c | c | c | c | c | c | c | c | c | c}
\hline

\hline
$d$ & $3$ & $4$ & $5$ & $6$ & $7$ & $8$ & $9$ & $\infty$  \\
\hline
$F_A$ & $0.7753$ & $0.7342$ & $0.7080$ & $0.6898$ & $0.6762$ & $0.6657$ & $0.6573$ & $0.5$  \\
\hline
\end{tabular}
\end{minipage}
\end{table}

\begin{proposition}
The $2d$ bases considered in our protocol are copied with maximal fidelity when using the optimal phase-covariant cloner described in~\cite{dDEB}.
\end{proposition}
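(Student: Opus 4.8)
The plan is to show that each of the $2d$ measurement bases used in h$d$DEB consists of ``equatorial'' states belonging to a single orbit of the diagonal phase group, and that this is exactly the orbit on which the phase-covariant cloner of~\cite{dDEB} attains its optimal fidelity; the common value $F_A$ is then forced to be the same for all of them.

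First I would recall the relevant feature of the cloner of~\cite{dDEB}. It is covariant under the group $G$ of diagonal phase unitaries $D_\Phi \assign \diag(\exp(i\phi_0),\dots,\exp(i\phi_{d-1}))$ (modulo a global phase), and as a consequence every input state of the balanced form $\ket{e_\Phi}\assign \frac1{\sqrt d}\sum_{k=0}^{d-1}\exp(i\phi_k)\ket k$ is cloned with one and the same fidelity, equal to the maximal value $F_A$ listed above for the given $d$. This covariance, together with the standard twirling (group-averaging) argument — averaging any cloner over $G$ cannot decrease its average fidelity on a $G$-invariant set of states — is exactly what makes this cloner optimal for the task of copying such a set, and it is what was used in~\cite{dDEB} to obtain the value $F_A$ for the four optimal bases of N-DEB.

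Next I would verify that the eigenbasis of every observable that Alice or Bob uses in h$d$DEB is of this form. In the h$d$DEB procedure the observable $\mathscr A_a = A_0^{d-1-a}A_1^{a}$ is implemented, via Proposition~1, by a single ditter $U_\Theta = HD_\Theta$ with phase shift $\Theta = (1,\theta^a,\dots,\theta^{(d-1)a})$ followed by a detection in the computational basis, so its eigenbasis is $\{\,U_\Theta^\dagger\ket m\,\}_{m=0}^{d-1}$; and a direct computation gives
$$
  U_\Theta^\dagger\ket m \;=\; \frac1{\sqrt d}\sum_{k=0}^{d-1}\theta^{-ka}\,\omega^{-km}\,\ket k ,
$$
which is $\ket{e_\Phi}$ with $\phi_k$ the argument of $\theta^{-ka}\omega^{-km}$. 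Replacing $Z$ by $Z^\dagger$ in the detection, as Proposition~1 requires for the genuine products, does not change this eigenbasis. The same holds for Bob's observables $\mathscr B_b$, with $\theta$ replaced by $\theta^{-1}$. Hence the $d$ bases of Alice together with the $d$ bases of Bob — the $2d$ bases of the protocol — all lie inside the orbit $\{\ket{e_\Phi}:\Phi\}$.

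Combining the two steps yields the proposition: every state measured in the protocol belongs to the orbit on which the cloner of~\cite{dDEB} is optimal, hence is copied with fidelity exactly $F_A$, the maximal value attainable by that cloner. The place I expect to need real care is the first step: one must check that the set of states for which the~\cite{dDEB} cloner is \emph{genuinely} optimal is the \emph{full} equatorial orbit, not some proper subfamily (for instance one with a constrained phase pattern), so that the particular phases $\theta^{-ka}\omega^{-km}$ occurring above are actually covered — once the explicit form of that cloner is unwound this is immediate, since these phases already sweep the whole orbit as $a$ and $m$ range. It is also worth remarking that h$d$DEB uses strictly more bases than N-DEB ($2d$ rather than $4$), but all of the same Fourier/equatorial type, so no new phenomenon appears and the fidelity $F_A$ is unchanged.
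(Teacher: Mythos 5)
Your proposal is correct and follows essentially the same route as the paper's own proof: both arguments reduce to observing that all $2d$ measurement bases consist of balanced-phase (``equatorial'') states $\frac1{\sqrt d}\sum_k e^{i\phi_k}\ket k$ and that the phase-covariant cloner of~\cite{dDEB} attains its optimal fidelity $F_A$ uniformly on that entire family. You simply make explicit two steps the paper leaves implicit — the computation of the eigenvectors $U_\Theta^\dagger\ket m$ showing they are equatorial, and the covariance argument forcing a common fidelity — which is a welcome amount of added detail but not a different proof.
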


\begin{proof}
Four of our bases are the same optimal bases than from~\cite{dDEB}. The $2(d-2)$ remaining bases have vectors of the form :

\begin{center}

$\frac{1}{\sqrt{d}}\sum\limits_{j=0}^{d-1}e^{i\gamma_{j}}\ket{j}$.

\end{center}

But the cloner described in~\cite{dDEB} is optimal for all states of the form :

\begin{center}
$\sum\limits_{j=0}^{d-1}\delta_{j}\ket{jj}$ for all $\delta_{j}$ verifying $ |\delta_{j}|^2 = \frac{1}{d}$.
\end{center}

Consequently, this cloner is also the optimal asymmetric qu$d$it cloner when considering our $2d$ bases.
\end{proof}

\subsection{The violation of an homogeneous Bell inequality as a sufficient condition for security}

 The violation factor is considered very important for the security of the key distribution protocol.  The presence of noise is usually modeled by the replacement of the initial entangled state by
a mixture

\begin{equation}
\begin{split}
  N \frac Id + (1-N) \ket\psi\bra\psi
  \label{Noisy}
\end{split}
\end{equation}

where $N$ is the proportion of noise.  The point is that the presence of noise decreases the experienced violation to $(1-N)v$ and that the protocol is considered useless when the initial state entanglement cannot be detected anymore.  With this criterion, it has been shown that a protocol is resistant to the presence of noise up to a threshold :

\begin{equation}
\begin{split}
N=1-1/v.
 \label{PresenceNoise}
\end{split}
\end{equation}

When using a noisy channel described by~(\ref{Noisy}), the fidelity (as defined in~(\ref{fidelity})) between the input state~$\ket\psi$ and the output state is given by
\begin{equation*}
F_N = \bra{\psi}\rho'\ket{\psi} = -\frac{d-1}{d}N + 1
\end{equation*}
where $\rho'=  N\frac Id + (1-N)\ket\psi\bra\psi$.
The presence of noise $N$ does not erase the non-classicality of the correlations as long as it stays below the value given by~(\ref{PresenceNoise}).  Hence, it is possible to use a channel for secure key distribution if its fidelity satisfies
\begin{equation}
F_N > \frac{d-1}{dv} + \frac{1}{d}.
\label{FidelityChannel}
\end{equation}

\bigskip

Suppose that an adversary Eve uses an optimal cloner which copy an input state with fidelity $F_A$ and introduces a minimal amount of error $1-F_A$ indistinguishable from an unbiased noise. 

\medskip

Eve's attacks won't be detected as long as $F_A \geqslant F_N$.  Hence, the security of the protocol
against individual cloning attacks is guaranteed if we have
$ {\frac{d-1}{dv} + \frac{1}{d} > F_A}$.  This is equivalent to
$v<\frac{d-1}{dF_A - 1}$.

\newpage

By replacing $F_A$ for each $d = 3, 4, 5$, we obtain the conditions :

\begin{table}[h]
\begin{center}
\begin{minipage}[t]{0.5\linewidth}
\centering
\begin{tabular}{| c | c | c}
\hline
   $d$ & Security criterion \\
\hline
$3$ & $v<1.508$\\ 
$4$ & $v<1.549$\\
$5$ & $v<1.575$\\
$6$ & $v<1.593$\\
$7$ & $v<1.607$\\
$8$ & $v<1.618$\\
$9$ & $v<1.627$\\
$\infty$ & $v<2$\\
\hline
\end{tabular}
\end{minipage}
\end{center}
\end{table}

\subsection{Comparison between N-DEB and h$d$DEB under the same security criterion}

By looking at Table 1 and Table 2, we notice that for each $d=3,4,5$ the violation values of CGLMP-$d$ and hCHSH-$d$ are below the security criterion, which ensure the security of the N-DEB and hNDEB protocols against this family of optimal cloning attacks. But it is also noticeable that, for each $d = 3, 4, 5$, there is a wide gap between the violation values attainable by CGLMP-$d$ and the maximal value tolerated by the security threshold. This gap can be closed by the use of our inequality CGLMP-$d$ which reaches largest violation values :

\begin{table}[h]
\begin{center}
\begin{minipage}[t]{0.5\linewidth}
\centering
\begin{tabular}{| c | c | c | c | c |}
\hline
   $d$ & $v_{\text{N-DEB}}$ & $v_{\text{h}d\text{DEB}}$ & Security criterion \\
\hline
$3$ & $1.436$ & $1.505$ & $v<1.508$\\ 
$4$ & $1.448$ & $1.546$ & $v<1.549$\\
$5$ & $1.455$ & $1.574$ & $v<1.575$\\
\hline
\end{tabular}
\end{minipage}
\end{center}
\end{table}

Moreover, this exploitable gap between $v_{\text{N-DEB}}$ and the security criterion increases with the dimension $d$. From this we conclude not only that our protocol tolerates a higher error rate in the channel than N-DEB while remaining secure against the same family of attacks, but also that this amelioration grows for a higher $d$.

\section{Conclusion}
 
Our goal was to generalize our protocol h3DEB~\cite{h3DEB} in any dimension $d$.  By using the homogeneous Bell inequality hCHSH-$d$ which reaches a better violation factor than the CGLMP-$d$ in dimension $d = 3, 4, 5$, our new protocol h$d$DEB obtain a better threshold of noise resistance than N-DEB.

As the inequality hCHSH-$d$ involves products of observables which become incompatible for quantum states, an important fact is the possibility to implement with slightly modified ditters the single observable corresponding to these products.  We showed in $4.2$ that all the observables needed to compute the violation of an homogeneous Bell inequality, including these products of observables, can be implemented by replacing the final measurement with observable $Z$ by a measurement with observable $Z^\dagger$.  Physically, this replacement corresponds just to a permutation of the detectors. 

  The gain in noise resistance of our protocol over N-DEB is due to the use of the inequality hCHSH-$d$.This inequality detects violations of local realism when some measurements are multiplicatively
related.  By using ditters measurements which respect this multiplicative constraints, the parties
running the protocol are able to exploit its larger violation capabilities. 

The use of $2d$ bases instead of four has the drawback of decreasing the effective dit transfer rate (the probability to obtain a key dit decreases from $\frac{1}{4}$ to $\frac{1}{2d}$) and it makes our protocol more complex ($2(d-2)$ supplementary devices), which can be a potential source of added noise. In the other hand, our protocol tolerates a higher threshold of noise than the one in the N-DEB protocol.

In the paper~\cite{dDEB}, the security of the protocol N-DEB against the optimal individual attack was investigated and it was possible to conclude that a violation of the CHSH-$d$ inequality was a sufficient condition to guarantee the security against individual attacks. We study here the security of our protocol h$d$DEB against the same optimal individual attack and we conclude that our protocol is also secure against this cloning attack. 

For the same level of security against individual attacks, we consequently obtain a better noise resistance than N-DEB and this amelioration is more and more visible when $d$ increases.

\bigskip

\noindent{\bf Acknowledgement}.  One of the author (Z.A.) was partially supported by Thales Alenia
Space during this work.

\bibliographystyle{unsrt}
\bibliography{biblio}

\begin{thebibliography}{10}

\bibitem{Ekert1991}
A.~K. Ekert.
\newblock Quantum cryptography based on bell's theorem.
\newblock {\em Physical Review Letters}, 67:661, 1991.

\bibitem{dDEB}
T.~Durt, D.~Kaszlikowski, J.L. Chen, and L.C. Kwek.
\newblock Security of quantum key distribution with entangled qudits.
\newblock {\em Physical Review A}, 69:032313, 2004.

\bibitem{CGLMP}
D.~Collins, N.~Gisin, N.~Linden, S.~Massar, and S.~Popescu.
\newblock Bell inequalities for arbitrarily high-dimensional systems.
\newblock {\em Phys. Rev. Lett.}, 88:040404, 2002.

\bibitem{h3DEB}
F.~Arnault and Z.~Amblard.
\newblock A qutrit quantum key distribution protocol with better noise
  resistance.
\newblock quant-ph/1404.4199, 2014.

\bibitem{Arnault2012}
F.~Arnault.
\newblock A complete set of multidimensional bell inequalities.
\newblock {\em Journal of Physics A}, 45:255304, 2012.

\bibitem{ViolAndNoise}
Harald Weinfurter and Marek \ifmmode~\dot{Z}\else \.{Z}\fi{}ukowski.
\newblock Four-photon entanglement from down-conversion.
\newblock {\em Phys. Rev. A}, 64:010102, Jun 2001.

\bibitem{ViolStrongThan}
D.~Kaszlikowski, M.~\^{Z}ukowski P.~Gnaci\'{n}ski1, W.~Miklaszewski, and
  A.~Zeilinger.
\newblock Violations of local realism by two entangled n-dimensional systems
  are stronger than for two qubits.
\newblock {\em Physical Review Letters}, 85:4418, 2000.

\bibitem{Tritters}
M.~\^{Z}ukowski, A.~Zeilinger, and M.A. Horne.
\newblock Realizable higher-dimensional two-particle entanglements via
  multiport beam splitters.
\newblock {\em Physical Review A}, 55:2564, 1997.

\bibitem{tritter1}
I.~Jex, S.~Stenholm, and A.~Zeilinger.
\newblock Hamiltonian theory of a symmetric multiport.
\newblock {\em Opt. Comm.}, 117:95, 1995.

\bibitem{tritter2}
K.~Mattle, M.~Michler, H.~Weinfurter, A.~Zeilinger, and M.~\^{Z}ukowski.
\newblock Non-classical statistics at multiport beam splitters.
\newblock {\em Applied Physics B}, 60:S111, 1995.

\end{thebibliography}

\section*{Appendix A}

We show that in the h$d$DEB procedure, when $a=b$, the results obtained by Alice and Bob are perfectly correlated. 

\medskip

Let first define the $d$-dimensional entangled state $\ket{\psi} \assign \sum\limits_{j=0}^{d-1}\delta_{j}\ket{jj}$ with $\delta_{j} \in \mathbb{C}$ for some $j = 0, ... , d-1$.

\medskip

The two ditters used by Alice and Bob perform on the state $\ket{\psi}$ the transformation ${(H \otimes H)(D_{\Theta} \otimes D_{\Theta^{*}})}$, with ${\Theta=(1,\theta^a,\theta^{2a}, ... , \theta^{(d-1)a})}$. We use the notations :

$$
HD_\Theta
  \assign
  \frac{1}{\sqrt{d}}\sum\limits_{k,l=0}^{d-1}\omega^{kl}\theta_{l}\ket{k}\bra{l}
$$
where the matrices $H$ and $D_\Theta$ are $H=(\omega^{kl})_{0\leq k,l\leq d-1}$ and
$D_\Theta=\diag(\theta_0,\theta_1, ... , \theta_{d})$.

We write :

\begin{equation*}
\begin{split}
(H \otimes H)(D_{\Theta} \otimes D_{\Theta^{*}}) & = (HD_\Theta) \otimes (HD_{\Theta^{*}}) \\
& = \frac1d (\sum\limits_{k,l=0}^{d-1}\omega^{kl}\theta_{l}\ket{k}\bra{l} \otimes \sum\limits_{k',l'=0}^{d-1}\omega^{k'l'}\theta_{l'}\ket{k'}\bra{l'}) \\
& = \frac1d (\sum\limits_{k,l, k', l'=0}^{d-1}\omega^{kl+k'l'}\theta_{l}\theta_{l'}\ket{kk'}\bra{ll'})
\end{split}
\end{equation*}

By applying this transformation to the state $\ket{\psi}$, we obtain :

\begin{equation*}
\begin{split}
(HD_\Theta \otimes HD_{\Theta^{*}})(\sum\limits_{j=0}^{d-1}\delta_{j}\ket{jj}) &= \frac{1}{d} (\sum\limits_{j=0}^{d-1}\sum\limits_{k,k'=0}^{d-1}\delta_{j}\omega^{j(k+k')}\ket{kk'})
\end{split}
\end{equation*}

From $1 + \omega + \omega^2 + ... + \omega^{d-1} = 0$ we finally find :

\begin{equation*}
\begin{split}
(HD_\Theta \otimes HD_{\Theta^{*}})(\sum\limits_{j=0}^{d-1}\delta_{j}\ket{jj}) &= (\sum\limits_{j=0}^{d-1}\delta_{j}) (\sum\limits_{\substack{k,k'=0 \\ k+k'\equiv 0 [d]}}^{d-1}\ket{kk'})
\end{split}
\end{equation*}

\section*{Appendix B}

In order to obtain the violation $ v \simeq 1.574$, we use the entangled state 

\begin{equation*}
\ket{\psi_{5}} = \frac 1{\sqrt{5}} (\ket{00}+\ket{11}+\ket{22}+\ket{33}-i\ket{44}).
\end{equation*}

with the following Bell operator :

\medskip

   $
 T_{5}
  = 
(2\omega^3 - 3\omega + 6) (A_1^4 B_1^4) - (4\omega^2 + 6\omega + 5) (A_1^4 B_1^3 B_2) \\
   \phantom{ T_{1} = } + (7\omega^3 - \omega^2 + 2\omega - 3) (A_1^4 B_1^2 B_2^2) + 
    (\omega^3 + \omega^2 - 4\omega - 3) (A_1^4 B_1 B_2^3) \\
    \phantom{ T_{1} = } + 2(\omega^3 + \omega^2 + 3\omega) (A_1^4 B_2^4) - (\omega^3 + 4\omega^2 + 3\omega + 2) (A_1^3 A_2 B_1^4) \\
    \phantom{ T_{1} = } - (5\omega^3 + 3\omega^2 + 3\omega + 4) (A_1^3 A_2 B_1^3 B_2) - (2\omega^3 + 3\omega^2 + 2\omega - 2) (A_1^3 A_2 B_1^2 B_2^2) \\
    \phantom{ T_{1} = } + (-2\omega^3 + \omega^2 + 1) (A_1^3 A_2 B_1 B_2^3) + (4\omega^2 - 2\omega + 3) (A_1^3 A_2 B_2^4) \\
   \phantom{ T_{1} = } + (-\omega^2 + 3\omega + 3) (A_1^2 A_2^2 B_1^4) + (2\omega^3 + 7\omega^2 + \omega) (A_1^2 A_2^2 B_1^3 B_2) \\
   \phantom{ T_{1} = } - (4\omega^3 + 6\omega^2 + 5\omega + 5) (A_1^2 A_2^2 B_1^2 B_2^2) - (3\omega^3 + 2) (A_1^2 A_2^2 B_1 B_2^3) \\
   \phantom{ T_{1} = } + (5\omega^3 + \omega + 4) (A_1^2 A_2^2 B_2^4) + (-2\omega^3 - 4\omega^2 + 1) (A_1 A_2^3 B_1^4) \\
   \phantom{ T_{1} = } + (\omega^3 + \omega^2 + \omega + 2) (A_1 A_2^3 B_1^3 B_2) + (\omega^3 + 3\omega + 1) (A_1 A_2^3 B_1^2 B_2^2) \\
   \phantom{ T_{1} = } - (7\omega^3 + 7\omega^2 + 4\omega + 7) (A_1 A_2^3 B_1 B_2^3) - (3\omega^3 + 2) (A_1 A_2^3 B_2^4) \\
   \phantom{ T_{1} = } + (-2\omega^3 + 2\omega^2 + 3\omega +2) (A_2^4 B_1^4) - (3\omega^3 + 1\omega^2 + 3\omega + 3) (A_2^4 B_1^3 B_2) \\
   \phantom{ T_{1} = } - (2\omega^3 + 3\omega + 5) (A_2^4 B_1^2 B_2^2) - (4\omega^3 + 5\omega^2 + 2\omega + 4) (A_2^4 B_1 B_2^3) \\
   \phantom{ T_{1} = } - (4\omega^3 + 6\omega^2 + 5\omega + 5) (A_2^4 B_2^4).
  $

\end{document}